\documentclass[conference]{IEEEtran}
\IEEEoverridecommandlockouts
\ifCLASSINFOpdf
\else
\fi
%
%

%
\usepackage[cmex10]{amsmath}
\usepackage{amsthm}
\usepackage{amssymb}
\usepackage{mathrsfs}
\usepackage[dvips]{graphicx}
\hyphenation{op-tical net-works semi-conduc-tor}
\newtheorem{remark}{Remark}
\newtheorem{theorem}{Theorem}
\begin{document}


%
\title{Achievable Rate Region for Multiple Access Channel with Correlated Channel States and Cooperating Encoders}

\author{\IEEEauthorblockN{Mahdi Zamanighomi, Mohammad Javad Emadi, Farhad Shirani Chaharsooghi, and Mohammad Reza Aref }
\IEEEauthorblockA{Information Systems and Security Lab (ISSL)\\
Electrical Engineering Department, Sharif University of Technology, Tehran, Iran.\\
Email: m\_zamani@ee.sharif.edu, emadi@ee.sharif.edu, fshirani@ee.sharif.edu, aref@sharif.edu}}


%


\maketitle

\begin{abstract}
In this paper, a two-user discrete memoryless multiple-access channel (DM-MAC) with correlated channel states, each known at one of the encoders is considered, in which each encoder transmits independent messages and tries to cooperate with the other one. To consider cooperating encoders, it is assumed that each encoder strictly-causally receives and learns the other encoder's transmitted symbols and tries to cooperate with the other encoder by transmitting its message. Next, we study this channel in a special case; we assume that the common part of both states is known at both, hence encoders use this opportunity to get better rate region. For these scenarios, an achievable rate region is derived based on a combination of block-Markov encoding and Gel'fand-Pinsker coding techniques. Furthermore, the achievable rate region is established for the Gaussian channel, and it is shown that the capacity region is achieved in certain circumstances.
\end{abstract}


%
\IEEEpeerreviewmaketitle

\section{\textbf{Introduction}}
Channels with states have become very important in communication especially in information theory \cite{4}. Initially, Shannon \cite{1} proposed single user discrete memoryless state-dependent channel with causal channel state information at transmitter (CSIT) and evaluated the capacity of this channel.  Subsequently, Gel'fand and Pinkser \cite{2} extended this type of problem to a channel with non-causal CSIT and characterized the capacity of the channel. The next stage was completed by Costa \cite{3}, who attained the Gaussian channel in Gel'fand-Pinsker's channel and demonstrated that the dirty paper coding (DPC) completely mitigates the effect of Gaussian additive interference, which is known non-causally at the transmitter. As a result of a growing range of applications, studying a multi-user model with random parameters has received considerable attention. To review more related studies, refer to \cite{4} , \cite{5}. In this paper, we focus on the multiple-access channel considered in (MAC) \cite{6} and for this channel we make use of the scenarios considered in \cite{7}-\cite{11}, especially for the Gaussian channel. However, in the Gaussian channel, we use generalized dirty paper coding (GDPC) used in \cite{12}.

Moreover, cooperation in MAC is also important. Willems in \cite{13} explained the MAC with partially cooperating encoders, which needs a noise-free limited-rate links to set up cooperation between two encoders. In \cite{14}-\cite{17}, various models of state-dependent MAC with conferencing links were used for this purpose. On the other hand, Willems and Van der Meulen have also considered the MAC with different strategies of cribbing encoders in \cite{18}, which takes advantage of the nature of wireless networks. Hence, in this way, there is no need to allocate conferencing links between two encoders and cooperation between encoders is implemented via cribbing each other's transmitted signals. Recently, in \cite{19}, the two-user discrete memoryless MAC (DM-MAC), in which encoder 2 cribs causally or strictly-causally from encoder 1, and encoder 2 knows CSIT non-causally, has been studied and the capacity region has been established.

In this paper, we study a more practical model in which encoder 1 strictly-causally receives and learns the transmitted channel inputs of encoder 2, and encoder 2 strictly-causally receives and learns the transmitted channel inputs of encoder 1. We have also assumed that both encoders have non-causal correlated CSITs. To consider the correlated CSITs, we have considered two scenarios: first, $\textbf{s}_\textbf{1}$ and $\textbf{s}_\textbf{2}$ are considered such that they are correlated and non-causally available at encoders 1 and 2 respectively; second, three independent states, $\textbf{s}_\textbf{0}$, $\textbf{s}_\textbf{1}$ and $\textbf{s}_\textbf{2}$ are considered and the pairs ($\textbf{s}_\textbf{0}$,$\textbf{s}_\textbf{1}$) and ($\textbf{s}_\textbf{0}$,$\textbf{s}_\textbf{2}$) are non-causally available at encoders 1 and 2, respectively. For both scenarios an achievable rate regions are then established. The Gaussian channel is considered and an the achievable rate region is characterized. Finally, we manage to achieve the capacity region for the Gaussian channel under certain conditions.

The remainder of the paper is organized as follows. In Section II, the channel model under consideration is introduced. In Section III, the main results are presented for the DM-MAC. In Section IV, the Gaussian channel is considered and the achievable rate region is established. Finally, the paper is concluded in Section VI.




\section{\textbf{Channel Model}}
The following notation is used throughout the paper: random variables (r.vs), and their realizations are denoted by capital letters and lower case letters, respectively. The bold face notation $\textbf{x}$ is used to show n-vector $\textbf{x}=(x_1,x_2,...,x_n)$. The probability density function (pdf) of r.v $X$ will be denoted by $P_X$, and $P_{Y|X}$ stands for conditional pdf of  $Y$ given $X$.
In our model, we have considered a two-user discrete memoryless state-dependent multiple-access channel (Fig.1), defined by a triple $(\mathcal{X}_1\times\mathcal{X}_2\times\mathcal{S}_1\times\mathcal{S}_2,P_{Y|X_1,X_2,S_1,S_2},\mathcal{Y})$ where $\mathcal{X}_i$, $\mathcal{S}_i$ for $i=1,2$, and $\mathcal{Y}$ are the finite sets alphabets of inputs, states, and an output, respectively. We have assumed the two r.vs $S_1$ and $S_2$ are correlated according to the pdf $P_{S_1 S_2} (s_1,s_2)$ and only $\textbf{s}_\textbf{i}$, $i=1,2$ is non-causally known to the \emph{i}th encoder. We have also assumed that as encoder 1 sends its own symbols, the other encoder receives the symbols of encoder 1 strictly-causally and vice versa. In wireless networks, when one node transmits its own information, it is possible for the neighbor-nodes to receive its signal. Therefore, neighbors help the node to transmit the message. Hence, by considering this concept, the cooperation between encoders in our model is based on the nature of wireless networks and it needs no extra dedicated links between two encoders. This model without channel states-dependent was introduced and its capacity region was also established for the DM-MAC by Willems-Van der Meulen \cite{18}. In our model, the channel and state processes are memoryless, i.e.,
\begin{eqnarray}
\nonumber &P(\textbf{y}|\textbf{x}_\textbf{1},\textbf{x}_\textbf{2},\textbf{s}_\textbf{1},\textbf{s}_\textbf{2})=\prod_{i=1}^n P(y_i|x_{1i},x_{2i},s_{1i},s_{2i})\\\nonumber
&P(\textbf{s}_\textbf{1},\textbf{s}_\textbf{2})=\prod_{i=1}^n P(s_{1i},s_{2i})
\end{eqnarray}
\textbf{Encoding and decoding}: In the two-user state-dependent MAC with cooperating encoders shown in Fig.1, the \emph{i}th encoder sends a message $W_{i}$, which is drawn uniformly from the set $W_{i}\triangleq\{1,2,...,2^{nR_i}\}$ to the receiver for $i=1,2$. it is also assumed that $W_1$ and $W_2$ are independent. A length-n code $C^n (R_1,R_2)$ consists of two encoding function sets ${f_{ij} (.)}_{j=1}^n$, $i=1,2$ and a decoding function $g(.)$ as follows:
\begin{eqnarray}
\left\{
\begin{array}{rl}
f_{1,j}: \mathcal{W}_1 \times \mathcal{S}_1^n \times \mathcal{X}_2^{j-1} \rightarrow \mathcal{X}_1 & \text{for }  j=1,...,n\\
f_{2,j}: \mathcal{W}_2 \times \mathcal{S}_2^n \times \mathcal{X}_1^{j-1} \rightarrow \mathcal{X}_2 & \text{for }  j=1,...,n
\end{array} \right.
\end{eqnarray}

As a result, we can write $X_{1,j}=f_{1,j} (W_1,S_1^n,X_2^{j-1})$ and $X_{2,j}=f_{2,j} (W_2,S_2^n,X_1^{j-1})$ and the decoding function is given by
\begin{equation}
g(.): \mathcal{Y}^n \rightarrow \mathcal{W}_1 \times \mathcal{W}_2
\end{equation}
which estimates the transmitted messages $W_1$  and $W_2$  as $(W_1,W_2)=g(Y^n)$. For a given code, the average of error probability is
\begin{eqnarray}
&P_e^n=\frac{1}{2^{n(R_1+R_2)}}\sum_{w_1,w_2}Pr\biggl\{\\\nonumber
&(\hat{W_1},\hat{W_2})\neq(w_1,w_2)|(w_1,w_2)\  has\  been\  sent\ \biggr\}
\end{eqnarray}

A rate-pair $(R_1,R_2)$ is said to be achievable if there exists a sequence of length-n code $C^n (R_1,R_2)$ with $P_e^n\rightarrow0$ as $n\rightarrow\infty$. The channel capacity region is the closure of all achievable rates.
\begin{flushleft}
\begin{figure}[ut]
\includegraphics[width = 3.5in]{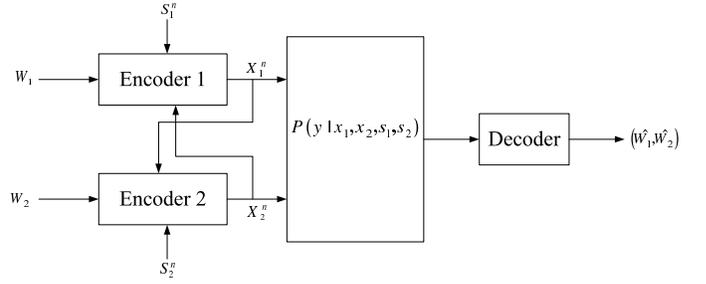}
\caption{the two-user state-dependent MAC Channel with correlated states and cooperating encoders}
\end{figure}
\end{flushleft}

\section{\textbf{Main Result}}
In this section, we state and prove our main results for the two-user state-dependent DM-MAC with cooperating encoders. In the first scenario, it is assumed that the two correlated states, $\textbf{s}_\textbf{1}$ and $\textbf{s}_\textbf{2}$, are non-causally known at encoders 1 and 2, respectively, but both encoders have no idea which part of their states is correlated. In the second scenario, it is assumed that the channel state is a triple $(\textbf{s}_\textbf{0},\textbf{s}_\textbf{1},\textbf{s}_\textbf{2})$ where $\textbf{s}_\textbf{0},\textbf{s}_\textbf{1},$ and $\textbf{s}_\textbf{2})$ are independent. It is also considered that encoder 1 knows $(\textbf{s}_\textbf{0},\textbf{s}_\textbf{1})$ and encoder 2 knows $(\textbf{s}_\textbf{0},\textbf{s}_\textbf{2})$. In the following, achievable rate regions for two scenarios are established.
\begin{theorem}
 For the two-user DM-MAC with cooperating encoders and two correlated states, $\textbf{s}_\textbf{1}$ and $\textbf{s}_\textbf{2}$, known non-causally to encoders 1 and 2, respectively, the following rate region is achievable:
\begin{eqnarray}
\nonumber&(R_1,R_2)=\bigcup\\
&\left\{
\begin{array}{rl}
R_1\leq I(X_1;V_1|U,S_2)-I(V_1;S_1|U,S_2)\\
R_2\leq I(X_2;V_2|U,S_1)-I(V_2;S_2|U,S_1)\\
R_1+R_2\leq I(Y;V_1,V_2,U)-I(V_1,V_2;S_1,S_2|U)
\end{array} \right\}
\end{eqnarray}
where $U,V_1$, and $V_2$ are three auxiliary r.vs with finite alphabets $\mathcal{U},\mathcal{V}_1$, and $\mathcal{V}_2$, respectively and the union is over the joint pdf of r.vs $S_1,S_2,U,V_1,V_2,X_1,X_2,$ and $Y$ as follows,
\begin{eqnarray}
&\nonumber P_{S_1 S_2 UV_1 V_2 X_1 X_2 Y} (s_1,s_2,u,v_1,v_2,x_1,x_2,y)=\\\nonumber
&P_{S_1 S_2} (s_1,s_2) P_U (u) P_{X_1 V_1 |US_1} (x_1,v_1|u,s_1)\times\\
&P_{X_2 V_2 |US_2} (x_2,v_2|u,s_2) P_{Y|X_1 X_2 S_1 S_2} (y|x_1,x_2,s_1,s_2)
\label{5}
\end{eqnarray}
\end{theorem}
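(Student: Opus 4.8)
The plan is to combine block-Markov superposition coding with distributed Gel'fand-Pinsker binning, exploiting the two-sided cribbing so that the encoders behave as a single cooperating transmitter after the first block. I would split the transmission into $B$ blocks and let the auxiliary cloud center $U$ carry a cooperative description of the message pair of the previous block: in block $b$ the sequence $\mathbf{u}(w_1^{(b-1)},w_2^{(b-1)})$ is drawn i.i.d.\ from $P_U$, and on top of each $\mathbf{u}$ I would generate two independently binned codebooks $\{\mathbf{v}_1\}\sim P_{V_1|U}$ and $\{\mathbf{v}_2\}\sim P_{V_2|U}$, each with $2^{nR_i}$ bins of $2^{nI(V_i;S_i|U)}$ codewords, the standard binning rate. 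For encoding in block $b$, both encoders already agree on $\mathbf{u}$, since each knows its own previous message and the other's previous message acquired by cribbing in block $b-1$. Encoder $i$ then searches bin $w_i^{(b)}$ for a $\mathbf{v}_i$ jointly typical with $(\mathbf{u},\mathbf{s}_i)$ and transmits $\mathbf{x}_i\sim P_{X_i|V_iUS_i}$; the covering lemma guarantees such a codeword exists whp because the bin size equals $2^{nI(V_i;S_i|U)}$. Note that the covering is against $\mathbf{s}_i$ alone, as encoder $i$ never observes the other state.

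The two individual bounds arise from the cribbing step rather than from the receiver. Within block $b$, encoder 2 strictly-causally acquires all of $\mathbf{x}_1$ and, using its side information $(\mathbf{u},\mathbf{s}_2)$, decodes $w_1^{(b)}$ by finding the unique bin containing a $\mathbf{v}_1$ jointly typical with $(\mathbf{u},\mathbf{s}_2,\mathbf{x}_1)$. A packing argument over the $2^{n(R_1+I(V_1;S_1|U))}$ codewords in the wrong bins gives $R_1 < I(V_1;X_1,S_2|U)-I(V_1;S_1|U)$. I would then invoke the Markov chain $V_1-(U,S_1)-S_2$ implied by the factorization (\ref{5}), under which $I(V_1;S_2|U,S_1)=0$, to rewrite this as exactly $I(X_1;V_1|U,S_2)-I(V_1;S_1|U,S_2)$; the symmetric step at encoder 1 yields the $R_2$ bound. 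The joint typicality of the genuine codeword needed here follows from the Markov lemma, since $\mathbf{v}_1$ was generated without reference to $\mathbf{s}_2$.

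The receiver then performs backward decoding. Having resolved the later blocks, it uses $\mathbf{y}^{(b)}$ together with the already-decoded cloud center $\mathbf{u}^{(b+1)}$ (which encodes $(w_1^{(b)},w_2^{(b)})$) to jointly decode $(U,V_1,V_2)$ for block $b$. The crucial subtlety is that, although the $\mathbf{v}_1$ and $\mathbf{v}_2$ codebooks are generated independently given $\mathbf{u}$, the transmitted pair is correlated through $(S_1,S_2)$, so the probability that an incorrect independently-drawn pair is jointly typical with $\mathbf{y}$ scales as $2^{-n[I(V_1V_2;Y|U)+I(V_1;V_2|U)]}$. Counting the $2^{n(R_1+R_2+I(V_1;S_1|U)+I(V_2;S_2|U))}$ candidate pairs and adding the coherent gain $I(Y;U)$ obtained by resolving $\mathbf{u}$ across blocks, the sum-rate constraint collapses, via the identity $I(V_1V_2;S_1S_2|U)=I(V_1;S_1|U)+I(V_2;S_2|U)-I(V_1;V_2|U)$, to $R_1+R_2 < I(Y;V_1,V_2,U)-I(V_1,V_2;S_1,S_2|U)$. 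No separate individual constraints appear at the receiver because the two-sided cribbing renders the encoders fully cooperative, so only the sum rate binds there.

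I expect the main obstacle to be the bookkeeping of this correlated distributed binning inside the block-Markov structure: keeping the covering feasible at each state-blind encoder while simultaneously guaranteeing correct cribbing and correct backward decoding, and faithfully tracking the correlation-induced term $I(V_1;V_2|U)$ and the $S_2$-conditioning that convert the raw packing bounds into the stated region. The remaining ingredients — choosing $B$ large to amortize the rate loss of the flushing block, and the union-of-events error analysis — are routine.
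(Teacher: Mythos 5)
Your proposal is correct and follows essentially the same route as the paper's own proof: block-Markov superposition with the cloud center $U$ carrying the previous block's message pair, Gel'fand--Pinsker binning of $V_i$ against $\mathbf{s}_i$ alone, individual bounds obtained from the cribbing decoders (using the Markov chain $S_{3-i}-(U,S_i)-V_i$ to put them in the stated conditional form, exactly as in the paper's Remark~1), and backward decoding at the receiver where the correlation term $I(V_1;V_2|U)$ combines with the binning rates to give $I(Y;V_1,V_2,U)-I(V_1,V_2;S_1,S_2|U)$, matching the paper's error events $E_1$--$E_9$. The only cosmetic difference is bookkeeping (the paper carries explicit $\epsilon$ slacks such as $R_{s_i}\geq I(V_i;S_i|U)+4\epsilon$, which your bin sizes should also include), not a substantive one.
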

\begin{remark}Based on Markov chain $S_{3-i}-S_iU-V_i$, $I(X_i;V_i|U,S_{3-i})-I(V_i;S_i|U,S_{3-i})=I(X_i,S_{3-i};V_i|U)-I(V_i;S_i|U), \forall i=1,2$.
\end{remark}
\begin{theorem}
 For the two-user DM-MAC with cooperating encoders and three independent states $\textbf{s}_\textbf{0}$, $\textbf{s}_\textbf{1}$  and $\textbf{s}_\textbf{2}$, such that $(\textbf{s}_\textbf{0},\textbf{s}_\textbf{1})$ and $(\textbf{s}_\textbf{0},\textbf{s}_\textbf{2})$ are known non-causally to encoders 1 and 2, respectively, the following rate region is achievable:
\begin{eqnarray}
\nonumber&(R_1,R_2)=\bigcup\\
&\left\{
\begin{array}{rl}
R_1\leq I(X_1;V_1|U,S_0)-I(V_1;S_1|U,S_0)\\
R_2\leq I(X_2;V_2|U,S_0)-I(V_2;S_2|U,S_0)\\
R_1+R_2\leq I(Y;V_1,V_2,U)-I(U,V_1,V_2;S_0,S_1,S_2)
\end{array} \right\}
\label{6}
\end{eqnarray}
where $U,V_1,$ and $V_2$ are three auxiliary r.vs with finite alphabets $\mathcal{U},\mathcal{V}_1$, and $\mathcal{V}_2$, respectively and the union is over the joint pdf of r.vs $S_0,S_1,S_2,U,V_1,V_2,X_1,X_2,$ and $Y$ as,
\begin{align}
&\nonumber\ \ \ \ \ \ \ \ P_{S_0 S_1 S_2 U V_1 V_2 X_1 X_2 Y} (s_0,s_1,s_2,u,v_1,v_2,y)=\\\nonumber
&\ \ \ \ \ \ \ \ \ \ \ \  P_{S_0} (s_0) P_{S_1} (s_1) P_{S_2} (s_2) P_{U|S_0} (u|s_0)\times\\\nonumber
&P_{X_1 V_1 |U S_0 S_1} (x_1,v_1|u,s_0,s_1) P_{X_2 V_2 |U S_0 S_2 } (x_2,v_2|u,s_0,s_2)\times\\
&\ \ \ \ \ \ \ \ \ \ \ \ \ \ \ P_{Y|X_1 X_2 S_0 S_1 S_2} (y|x_1,x_2,s_0,s_1,s_2)
\label{7}
\end{align}
\end{theorem}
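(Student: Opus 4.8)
The plan is to prove achievability for the region $(\ref{6})$ by combining block-Markov encoding, which realizes the cooperation created by the strictly-causal mutual cribbing \cite{18}, with Gel'fand--Pinsker binning \cite{2}, which pre-cancels the non-causally known states; this is the same architecture as in the proof of Theorem~1, modified to exploit the crucial fact that the common state $\textbf{s}_\textbf{0}$ is observed at \emph{both} encoders. Transmission takes place over $B$ blocks of length $n$, with fresh messages $w_{i,b}$ in blocks $b=1,\dots,B-1$ and a known terminating block. The key structural observation is that, since encoder~1 strictly-causally learns $\textbf{x}_\textbf{2}$ and encoder~2 strictly-causally learns $\textbf{x}_\textbf{1}$, at the end of each block both encoders possess the complete pair $(\textbf{x}_\textbf{1},\textbf{x}_\textbf{2})$ transmitted in that block, hence a common ``resolution index'' for it. Combined with the commonly observed $\textbf{s}_\textbf{0}$, this lets the two encoders coherently produce the \emph{same} common codeword $\textbf{u}$ in the next block. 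Thus $U$ acts as the cooperative cloud centre carrying the block-Markov resolution information and binned against $S_0$, while $V_1,V_2$ are the private Gel'fand--Pinsker auxiliaries layered on top, and $(\ref{7})$ is exactly the joint law induced by this construction.

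First I would fix a distribution of the form $(\ref{7})$ and build the codebook block by block. For the common layer, generate $2^{n\tilde R_0}$ sequences $\textbf{u}$ i.i.d. from $P_U$ and partition them into bins; because $U$ is correlated with $S_0$, each encoder selects the $\textbf{u}$ in the designated bin that is jointly typical with $\textbf{s}_\textbf{0}$, a covering step costing $I(U;S_0)$ and succeeding identically at both encoders since they share both the bin index and $\textbf{s}_\textbf{0}$. For each $\textbf{u}$, generate $2^{n\tilde R_1}$ sequences $\textbf{v}_\textbf{1}$ from $P_{V_1|U}$ and $2^{n\tilde R_2}$ sequences $\textbf{v}_\textbf{2}$ from $P_{V_2|U}$, partitioned into bins indexed by $w_{1,b}$ and $w_{2,b}$. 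Encoder~$i$ then performs its private covering step, searching within the bin addressed by $w_{i,b}$ for a $\textbf{v}_\textbf{i}$ jointly typical with $(\textbf{u},\textbf{s}_\textbf{0},\textbf{s}_\textbf{i})$ and transmitting the induced $\textbf{x}_\textbf{i}$. The covering lemma guarantees success provided each bin holds at least $2^{nI(V_i;S_i|U,S_0)}$ codewords, which produces the negative Gel'fand--Pinsker terms in the individual rate bounds.

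For decoding I would use backward decoding: after all $B$ blocks are received, the receiver decodes in reverse, in each block jointly decoding the triple $(U,V_1,V_2)$ by joint typicality with $\textbf{y}$. The standard packing argument yields the family of constraints limiting the number of $(\textbf{u},\textbf{v}_\textbf{1},\textbf{v}_\textbf{2})$ triples the receiver must resolve, the dominant joint event giving a total of at most $2^{nI(Y;U,V_1,V_2)}$. Writing the total codebook rates as the sum of message rates and binning rates, and then eliminating the binning rates $\tilde R_0-R_0$, $\tilde R_1-R_1$, $\tilde R_2-R_2$ (which are pinned by the covering requirements above) via Fourier--Motzkin, the per-encoder covering/packing balance collapses to the two individual bounds, while the joint packing event, after subtracting the joint binning cost, yields the sum-rate bound with the combined state penalty $I(U,V_1,V_2;S_0,S_1,S_2)$; letting $B\to\infty$ removes the rate loss of the terminating block.

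The main obstacle I anticipate is the joint error analysis, which must simultaneously control three covering events at the encoders (the shared $U$ against $S_0$ and each $V_i$ against $S_i$) and the joint packing event at the receiver, all under the single joint law $(\ref{7})$. In particular, one must verify that the $\textbf{u}$ generated \emph{independently} by the two encoders is genuinely identical---this is precisely where the strictly-causal mutual cribbing and the shared $\textbf{s}_\textbf{0}$ are indispensable---and that superposing two independent Gel'fand--Pinsker layers on a common cloud centre still induces the target conditionals $P_{X_1V_1|US_0S_1}$ and $P_{X_2V_2|US_0S_2}$ without injecting spurious correlation through the states. The delicate accounting step is the sum-rate bound, where $(U,V_1,V_2)$ must be treated as a single super-codeword binned against the full state triple $(S_0,S_1,S_2)$; here the independence $S_0\perp S_1\perp S_2$ and the Markov relations implicit in $(\ref{7})$---mirroring the Remark following Theorem~1---are what make the binning penalties recombine cleanly into $I(U,V_1,V_2;S_0,S_1,S_2)$.
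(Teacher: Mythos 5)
Your overall architecture (block-Markov cooperation plus Gel'fand--Pinsker layers, with $\textbf{u}$ binned against $\textbf{s}_\textbf{0}$, $\textbf{v}_\textbf{1},\textbf{v}_\textbf{2}$ superimposed, and backward decoding at the receiver) is the same as the paper's, which proves Theorem 2 by reusing the proof of Theorem 1: add one covering event $E_0$ for $\textbf{u}$ against $\textbf{s}_\textbf{0}$ with $R_{S_0}\geq I(U;S_0)+2\epsilon$, replace $\textbf{s}_\textbf{i}$ by $(\textbf{s}_\textbf{0},\textbf{s}_\textbf{i})$, and rerun the error analysis. However, your proposal has a genuine gap: it omits---and in fact argues away---the decoding of each message at the \emph{other} encoder, which is the only place the individual bounds of \eqref{6} can come from. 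Observing $\textbf{x}_\textbf{2}$ via cribbing is not the same as knowing $w_2$: the common codeword $\textbf{u}$ of the next block is indexed by the message pair $w_0=(w_1,w_2)$ (this is what allows the receiver's backward decoding of $w_0$ to recover the messages), so encoder 1 must decode $w_2$ from the cribbed $\textbf{x}_\textbf{2}$ using $(\textbf{u},\textbf{s}_\textbf{0},\textbf{s}_\textbf{1})$, and symmetrically for encoder 2. These are packing events (the paper's $E_5$ and $E_7$), and they are exactly what produce $R_i\leq I(X_i;V_i|U,S_0)-I(V_i;S_i|U,S_0)$. Note that these bounds contain $X_i$, which the receiver never observes; no Fourier--Motzkin elimination applied to the receiver's typicality events (which involve $Y$) can generate them, so your attribution of the individual bounds to a ``per-encoder covering/packing balance'' resolved at the decoder is unfounded. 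In your scheme as written, either the encoders cannot agree on $w_0$, or the two individual constraints of \eqref{6} are never enforced.

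There is also a quantitative error in your covering step. Since the $\textbf{v}_\textbf{i}$-codebook must be known to the receiver, which does not know $\textbf{s}_\textbf{0}$, the codewords are necessarily drawn from $P_{V_i|U}$, and each bin must cover the \emph{pair} $(\textbf{s}_\textbf{0},\textbf{s}_\textbf{i})$ jointly with $\textbf{u}$; the covering lemma then requires bin size $2^{nI(V_i;S_0,S_i|U)}=2^{n\left(I(V_i;S_0|U)+I(V_i;S_i|U,S_0)\right)}$, not $2^{nI(V_i;S_i|U,S_0)}$, and under the law \eqref{7} one has $I(V_i;S_0|U)>0$ in general (e.g., in the Gaussian scheme of Theorem 3). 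This matters for the sum-rate accounting: the receiver's joint packing exponent is $I(Y;U,V_1,V_2)+I(V_1;V_2|U)$ because the wrong candidate triples have $V_1,V_2$ conditionally independent given $U$, while with the \emph{correct} binning rates the total binning cost equals $I(U,V_1,V_2;S_0,S_1,S_2)+I(V_1;V_2|U)$; the two $I(V_1;V_2|U)$ terms cancel and the sum-rate bound of \eqref{6} emerges. With your understated covering rates the covering step fails (the induced empirical law no longer matches \eqref{7}), or, carrying your rates through formally, the sum-rate bound comes out too large by $I(V_1;S_0|U)+I(V_2;S_0|U)$. Both issues are repaired by following the paper's route: keep Theorem 1's encoder-decoding events and covering rates, substitute $(\textbf{s}_\textbf{0},\textbf{s}_\textbf{i})$ for $\textbf{s}_\textbf{i}$, add the $U$-covering at rate $I(U;S_0)$, and use the mutual independence of $S_0,S_1,S_2$ together with the Markov structure of \eqref{7} to simplify the resulting expressions to \eqref{6}.
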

\begin{proof}[Proof of Theorem 1] In the achievability proof, we use block-Markov encoding (BME) \cite{20}-\cite{21} and Gel'fand-Pinsker coding (GPC) techniques. As each encoder receives the transmitted symbols of the other, it tries to decode the other's message and transmit it in the next block with its new information. Thus, each encoder uses superposition BME for the transmission of new and old information, and both encoders cooperate with each other by transmitting other's message. We consider $B$ blocks and each block, has $n$ symbols. The encoders send a certain amount of new and old information in each block. In the block $b$, the fresh information is a message pair $(W_1^b,W_2^b)$ and old information is $W_0^{b}=(W_1^{b-1},W_2^{b-1})$ for $b=1,2,...,B-1$. $W_1^b$ and $W_2^b$ are independent and identically distributed (i.i.d) sequences with a uniform distribution over $\{1,2,...,2^{nR_1}\}$ and $\{1,2,...,2^{nR_2}\}$. We have $B-1$ blocks which contain new information. If we assume that $B$ tends to be infinite, for a fixed $n$, then $\frac{R_1(B-1)}{B}\rightarrow R_1$ and $\frac{R_2(B-1)}{B}\rightarrow R_2$.

\textbf{Codebook generation}: In each block, the same codebooks are generated as follows:
1) Generate $2^{n(R_1+R_2)}$ sequences  $\textbf{u}=(u_1,u_2,...,u_n)$, each with the probability $Pr(\textbf{u})=\prod_{i=1}^n P(u_i)$ .Label them $\textbf{u}(w_0)$for $w_0\in \{1,2,...,2^{n(R_1+R_2)}\}$.

2) For each $\textbf{u}(w_0)$, generate $2^{n(R_1+R_{s_1 })}$ sequences $\textbf{v}_\textbf{1}=(v_{11},v_{12},...,v_{1n})$ according to $Pr(\textbf{v}_\textbf{1}|\textbf{u})=\prod_{i=1}^n P(v_{1i}|u_{i(w_0 )})$. Label them $\textbf{v}_\textbf{1}(w_0,w_1,j)$, whereby $w_1$ denotes the bin and  $j$ identifies the index in the $w_1$th bin. Also, $j\in \{1,2,...,2^{nR_{s_1 }}\}$ and $w_1\in \{1,2,...,2^{nR_1}\}$.

3) Similarly, for each $\textbf{u}(w_0 )$ generate $2^{n(R_2+R_{s_2})}$ sequences $\textbf{v}_\textbf{2}=(v_{21},v_{22},...,v_{2n})$ with the probability $Pr(\textbf{v}_\textbf{2}|\textbf{u})=\prod_{i=1}^n P(v_{2i}|u_{i(w_0 )})$ . Label them $\textbf{v}_\textbf{2}(w_0,w_2,k)$, whereby $w_2$ identifies the bin and $k$ is the index in the $w_2$th bin. Also, $k\in \{1,2,...,2^{nR_{s_2}}\}$ and $w_2\in \{1,2,...,2^{nR_2}\}$.
 We use the BME, since each encoder shares its own information with the other encoder and this process needs memory. Therefore, for each encoder, two variables are needed; $\textbf{u}(w_0^{b})$ denotes the shared message (old information) which is common to both encoders to set up cooperation and $\textbf{v}_\textbf{1}(w_0^b,w_1^b,j)$, and $\textbf{v}_\textbf{2}(w_0^b,w_2^b,k)$ contain both old and new information for each encoder. Since encoders 1 and 2 do not know a common part of their CSITs, $\textbf{u}$ cannot be chosen by the GPC, but $\textbf{v}_\textbf{1}$ and $\textbf{v}_\textbf{2}$ are chosen by the GPC technique based on their known CSITs. In the following, the superscript $b$ shows the index of block.


\textbf{Encoding}: At block $b$, we denote $w_1,\text{ }w_2,\text{ }\textbf{s}_\textbf{1}=(s_{11},s_{12},...,s_{1n})$, and $\textbf{s}_\textbf{2}=(s_{21},s_{22},...,s_{2n})$  by $w_1^b,\text{ }w_2^b,\text{ }\textbf{s}_\textbf{1}^\textbf{b}=(s_{11}^b,s_{12}^b,...,s_{1n}^b)$, and $\textbf{s}_\textbf{2}^\textbf{b}=(s_{21}^b,s_{22}^b,...,s_{2n}^b)$. We use a combination of the BME and GPC techniques, in which the message pair $(w_1^b,w_2^b)$ is encoded completely in blocks $b$ and $b+1$. The transmitted signals at each encoder is defined as follows:\\
It is assumed that in block 1 encoders send
\begin{eqnarray}
&\nonumber \textbf{x}_\textbf{1}^\textbf{1}=\textbf{x}_\textbf{1} ((1,1),w_1^1,\textbf{s}_\textbf{1}^\textbf{1})\\\nonumber
&\textbf{x}_\textbf{2}^\textbf{1}=\textbf{x}_\textbf{2} ((1,1),w_2^1,\textbf{s}_\textbf{2}^\textbf{1})
\end{eqnarray}

In blocks $b=2,...,B-1$, encoders send
\begin{eqnarray}
&\nonumber \textbf{x}_\textbf{1}^\textbf{b}=\textbf{x}_\textbf{1}(w_0^{'b},w_1^b,\textbf{s}_\textbf{1}^\textbf{b} )   ,\text{  }     w_0^{'b}=(w_1^{b-1},w_2^{'b-1})\\\nonumber
&\textbf{x}_\textbf{2}^\textbf{b}=\textbf{x}_\textbf{2}(w_0^{''b},w_2^b,\textbf{s}_\textbf{2}^\textbf{b})  ,\text{  }     w_0^{''b}=(w_1^{''b-1},w_2^{b-1})
\end{eqnarray}

Finally in block $B$ we have
\begin{eqnarray}
&\nonumber \textbf{x}_\textbf{1}^\textbf{B}=\textbf{x}_\textbf{1}(w_0^{'b},1,\textbf{s}_\textbf{1}^\textbf{B})     ,\text{  }     w_0^{'B}=(w_1^{B-1},w_2^{'B-1})\\\nonumber
&\textbf{x}_\textbf{2}^\textbf{B}=\textbf{x}_\textbf{2}(w_0^{''B},1,\textbf{s}_\textbf{2}^\textbf{B})    ,\text{  }     w_0^{''B}=(w_1^{''B-1},w_2^{B-1})
\end{eqnarray}

\textbf{Generating }$\textbf{x}_\textbf{1}^\textbf{b}=\textbf{x}_\textbf{1} (w_0^{'b},w_1^b,\textbf{s}_\textbf{1}^\textbf{b})$ \textbf{at encoder 1}: Before the beginning of block $b=2,...,B$, it is assumed that encoder 1 has already estimated the transmitted message of encoder 2 at block $b-1$ as $w_2^{'b-1}$. Hence, encoder 1 uses $W_0^{'b}=(W_1^{b-1},W_2^{'b-1})$ in block $b$ as the common message. Then, encoder 1 chooses codeword $\textbf{u}(w_0^{'b})$. Now, based on the fresh information of block $b$, $w_1^b$, the bin number of the codeword $\textbf{v}_\textbf{1}$ is also determined, and encoder 1 searches for $\textbf{v}_\textbf{1}(w_0^{'b},w_1^b,j)$ with the smallest $j$ which is jointly typical with pair $(\textbf{u}(w_0^{'b}),\textbf{s}_\textbf{1}^\textbf{b})$ and describes $j$ with $j(w_0^{'b},w_1^b,\textbf{s}_\textbf{1}^\textbf{b})$. On the other hand, if there is no $j$ to satisfy joint typicality, an error occurs. Thus, the codeword $\textbf{x}_\textbf{1}^\textbf{b}=\textbf{x}_\textbf{1}(w_0^{'b},w_1^b,\textbf{s}_\textbf{1}^\textbf{b})$ with i.i.d. symbols is generated conditionally on the triple $(\textbf{u}(w_0^{'b}),\textbf{v}_\textbf{1}(w_0^{'b},w_1^b,j),\textbf{s}_\textbf{1}^\textbf{b})$, where the conditional law is induced by \eqref{5}.

\textbf{Generating} $\textbf{x}_\textbf{2}^\textbf{b}=\textbf{x}_\textbf{2}(w_0^{''b},w_2^b,\textbf{s}_\textbf{2}^\textbf{b})$ \textbf{at encoder 2}: It is assumed, because of receiving $\textbf{x}_\textbf{1}^\textbf{{b-1}}=\textbf{x}_\textbf{1}(w_0^{'b-1},w_1^{b-1},\textbf{s}_\textbf{1}^\textbf{{b-1}})$ by encoder 2, it is possible for encoder 2 to estimate $w_1^{b-1}$ as $w_1^{''b-1}$. Thus, encoder 2 makes $w_0^{''b}=(w_1^{''b-1},w_2^{b-1})$ in block $b$ as the common message. Similar to encoder 1, encoder 2 finds the codeword $\textbf{v}_\textbf{2}(w_0^{''b},w_2^b,k)$ with the smallest $k$ which is jointly typical with pair $(\textbf{u}(w_0^{''b}),\textbf{s}_\textbf{2}^\textbf{b})$, and then $\textbf{x}_\textbf{2}^\textbf{b}=\textbf{x}_\textbf{2} (w_0^{''b},w_2^b,\textbf{s}_\textbf{2}^\textbf{b})$ can be generated. On the other hand, if there is no $k$ to satisfy joint typicality, an error occurs.

\textbf{Decoding}: As mentioned, in addition to decoding at the receiver, each encoder has a decoding process, so we have:

1) Decoding at encoder 1: In order to construct the cooperation after block $b=1,2,...,B-1$, encoder 1 estimates $w_2^{'b}$ such that
\begin{equation}
\left(\textbf{u}(W_0^{'b}),\textbf{x}_\textbf{2}^\textbf{b},\textbf{v}_\textbf{2}(W_0^{'b},w_2^{'b},k),\textbf{s}_\textbf{1}^\textbf{b}\right)\in A_\epsilon^n (U,X_2,V_2,S_1)
\end{equation}
where $W_0^{'b}$has been estimated in the previous block, and $\textbf{x}_\textbf{2}^\textbf{b}$ has been received from the transmission of encoder 2.

2) Decoding at encoder 2: After block $b=1,2,...,B-1$, encoder 2 estimates $w_1^{''b}$ such that
\begin{equation}
\left(\textbf{u}(W_0^{''b}),\textbf{x}_\textbf{1}^\textbf{b},\textbf{v}_\textbf{1}(W_0^{''b},w_1^{''b},k),\textbf{s}_\textbf{2}^\textbf{b}\right)\in A_\epsilon^n (U,X_1,V_1,S_2)
\end{equation}
where $W_0^{''b}$ has been estimated in the previous block, and $\textbf{x}_\textbf{1}^\textbf{b}$ has been received from the transmission of encoder 1.
3) Decoding at the receiver: to utilize simultaneous decoding, the backward decoding technique is used at the decoder.
At the end of block $B$, $w_0^B$is estimated such that
\begin{equation}
\left(\textbf{u}(\hat{w}_0^B),\textbf{v}_\textbf{1}(\hat{w}_0^B,1,j),\textbf{v}_\textbf{2}(\hat{w}_0^B,1,k),\textbf{y}\right)\in A_\epsilon^n(U,V_1,V_2,Y)
\end{equation}
and $w_0^B=(\hat{w}_1^{B-1},\hat{w}_2^{B-1})$. For block $b=B-1,...,2$, we have
\begin{equation}
\left(\textbf{u}(\hat{w}_0^b),\textbf{v}_\textbf{1}(\hat{w}_0^b,\hat{W}_1^b,j),\textbf{v}_\textbf{2}(\hat{w}_0^b,\hat{W}_2^b,k),\textbf{y}\right)\in A_\epsilon^n (U,V_1,V_2,Y)
\end{equation}
where $\hat{W}_1^b$ and $\hat{W}_2^b$ have already been estimated in block $b+1$. Since $w_0^1=(1,1)$, it is unnecessary to decode at block 1.
\textbf{Error probability analysis}: The error probability is given by
\begin{eqnarray}
&&\nonumber\overline{P_e^B}=\bigcup_{b=1}^{B}\biggl\{\sum_{(\textbf{s}_\textbf{1}^\textbf{b},\textbf{s}_\textbf{2}^\textbf{b}\notin A_\epsilon^n(S_1,S_2))}Pr(\textbf{s}_\textbf{1}^\textbf{b},\textbf{s}_\textbf{2}^\textbf{b})\\\nonumber
&&+\sum_{(\textbf{s}_\textbf{1}^\textbf{b},\textbf{s}_\textbf{2}^\textbf{b}\in A_\epsilon^n (S_1,S_2))}Pr(Error|\textbf{s}_\textbf{1}^\textbf{b},\textbf{s}_\textbf{2}^\textbf{b})Pr(\textbf{s}_\textbf{1}^\textbf{b},\textbf{s}_\textbf{2}^\textbf{b})\biggr\}\leq\\
&&\varepsilon+\bigcup_{b=1}^{B}\biggl\{\sum_{(\textbf{s}_\textbf{1}^\textbf{b},\textbf{s}_\textbf{2}^\textbf{b}\in A_\epsilon^n (S_1,S_2))}Pr(Error|\textbf{s}_\textbf{1}^\textbf{b},\textbf{s}_\textbf{2}^\textbf{b})Pr(\textbf{s}_\textbf{1}^\textbf{b},\textbf{s}_\textbf{2}^\textbf{b})\biggr\}\ \ \ \ \
\end{eqnarray}

According to the second term, we define the error events for specific state sequences as follows:

For $b=1 ,... ,B$
\begin{align}
&\nonumber E_1^b:=\{\nexists j:1\leq j\leq2^{nR_{s1}}:(\textbf{v}_\textbf{1}(W_0,W_1,j),\textbf{s}_\textbf{1},\textbf{u}(W_0))\\\nonumber
&\ \ \ \ \ \in A_\epsilon^n (U,S_1,V_1)\}\\\nonumber
&E_2^b:=\{\nexists k:1\leq k\leq2^{nR_{s2}}:(\textbf{v}_\textbf{2}(W_0,W_2,j),\textbf{s}_\textbf{2},\textbf{u}(W_0))\\\nonumber
&\ \ \ \ \ \in A_\epsilon^n (U,S_2,V_2)\}\\\nonumber
&E_3^b:=\{(\textbf{u}(w_0),\textbf{v}_\textbf{1}(w_0,w_1,j),\textbf{v}_\textbf{2}(w_0,w_2,k),\textbf{s}_\textbf{1},\textbf{s}_\textbf{2})\\\nonumber
&\ \ \ \ \ \notin A_\epsilon^n(U,V_1,V_2,S_1,S_2)\}\\\nonumber
&E_4^b:=\{(\textbf{u}(w_0),\textbf{v}_\textbf{2}(w_0,w_2,k),\textbf{x}_\textbf{2},\textbf{s}_\textbf{1})\\\nonumber
&\ \ \ \ \ \notin A_\epsilon^n(U,V_2,X_2,S_1)\}\\\nonumber
&E_5^b:=\{\exists w_2^{'}\neq W_2:(\textbf{u}(W_0),\textbf{v}_\textbf{2}(W_0,w_2^{'},k),\textbf{x}_\textbf{2},\textbf{s}_\textbf{1})\\\nonumber
&\ \ \ \ \ \in A_\epsilon^n (U,V_2,X_2,S_1)\}\\\nonumber
&E_6^b:=\{(\textbf{u}(w_0),\textbf{v}_\textbf{1}(w_0,w_1,j),\textbf{x}_\textbf{1},\textbf{s}_\textbf{2})\\\nonumber
&\ \ \ \ \ \notin A_\epsilon^n(U,V_1,X_1,S_2)\}\\\nonumber
&E_7^b:=\{\exists w_1^{'}\neq W_1:(\textbf{u}(W_0 ),\textbf{v}_\textbf{1}(W_0,w_1^{'},j),\textbf{x}_\textbf{1},\textbf{s}_\textbf{2})\\\nonumber
&\ \ \ \ \ \in A_\epsilon^n(U,V_1,X_1,S_2)\}\\\nonumber
&E_8^b:=\{(\textbf{u}(w_0),\textbf{v}_\textbf{1}(w_0,w_1,j),\textbf{v}_\textbf{2}(w_0,w_2,k),\textbf{y})\\\nonumber
&\ \ \ \ \ \notin A_\epsilon^n (U,V_1,V_2,Y)\}\\\nonumber
&E_9^b:=\{\exists w_0^{'}\neq W_0:(\textbf{u}(w_0^{'}),\textbf{v}_\textbf{1}(w_0^{'},W_1,j),\textbf{v}_\textbf{2}(w_0^{'},W_2,k),\textbf{y})\\\nonumber
&\ \ \ \ \ \in A_\epsilon^n (U,V_1,V_2,Y)\}
\end{align}

Based on the error events, the error probability is upper bounded by
\begin{align}
\nonumber&\overline{P_e^B}\leq Pr\Biggl\{
\bigcup_{b=1}^{B-1}\Biggl(E_1^b\cup E_2^b\cup \left(E_3^b|\overline{E_1^b},\overline{E_2^b}\right)\cup\left(E_4^b|\overline{E_3^b}\right)\\\nonumber
&\bigcup_{w_2^{'}\neq W_2^b}\left(E_5^b |\overline{E_3^b},\overline{E_4^b}\right)\cup\left(E_6^b|\overline{E_3^b}\right)\bigcup_{w_1^{'}\neq W_1^b}\left(E_7^b|\overline{E_3^b},\overline{E_6^b}\right)\Biggr)\\\nonumber
&+\bigcup_{b=B}^2\Biggl(E_1^b\cup E_2^b\cup\left(E_3^b|\overline{E_1^b},\overline{E_2^b}\right)\cup\left(E_8^b|\overline{E_3^b }\right)\\\nonumber
&\bigcup_{w_0^{'}\neq W_0^b}\left(E_9^b|\overline{E_3^b},\overline{E_8^b }\right)\Biggr)\Biggr\}\\\nonumber
&\leq\sum_{b=1}^B\biggl\{Pr(E_1^b)+Pr(E_2^b)+Pr\left(E_3^b|\overline{E_1^b},\overline{E_2^b}\right)\\\nonumber
&+Pr\left(E_4^b|\overline{E_3^b}\right)+Pr\left(E_6^b |\overline{E_3^b}\right)+Pr\left(E_8^b|\overline{E_3^b}\right)\biggr\}\\\nonumber
&+\sum_{b=1}^{B-1}\sum_{w_2^{'}\neq W_2^b}Pr\left(E_5^b|\overline{E_3^b},\overline{E_4^b}\right)\\\nonumber
&+\sum_{b=1}^{B-1}\sum_{w_1^{'}\neq W_1^b}Pr\left(E_7^b |\overline{E_3^b},\overline{E_6^b}\right)\\\nonumber
&\leq(B-1)×\biggl\{Pr(E_1)+Pr(E_2)+Pr\left(E_3|\overline{E_1},\overline{E_2}\right)\\\nonumber
&+Pr\left(E_4|\overline{E_3}\right)+Pr\left(E_6|\overline{E_3}\right)+Pr(E_8|\overline{E_3})\\\nonumber
&+\sum_{w_1^{'}\neq W_1}Pr(E_7|\overline{E_3},\overline{E_6 })+\sum_{w_2^{'}\neq W_2}Pr(E_5|\overline{E_3},\overline{E_4})\\
&+\sum_{w_0^{'}\neq W_0}Pr(E_9|\overline{E_3},\overline{E_8})\biggr\}
\end{align}

We now bound each probability of error events. For independent $\textbf{s}_\textbf{1}$and \textbf{u}, the probability which $(\textbf{u},\textbf{s}_\textbf{1},\textbf{v}_\textbf{1})\in A_\epsilon^n$ is bounded by
\begin{align}
&\nonumber Pr((\textbf{u},\textbf{s}_\textbf{1},\textbf{v}_\textbf{1})\in A_\epsilon^n (U,S_1,V_1))=\\\nonumber
&\sum_{(\textbf{u},\textbf{s}_\textbf{1},\textbf{v}_\textbf{1})\in A_\epsilon^n(U,S_1,V_1)}P(\textbf{u})P(\textbf{s}_\textbf{1})P(\textbf{v}_\textbf{1}|\textbf{u})\\\nonumber
&\ \ \ \ \ \ \ \ \ \ \geq|A_\epsilon^n(U,S_1,V_1 )|2^{-n\left(H(U)-\epsilon\right)}2^{-n\left(H(S_1)-\epsilon\right)}2^{-n\left(H(V_1|U)-\epsilon\right)}\\\nonumber
&\ \ \ \ \ \ \ \ \ \ \geq2^{n\left(H(U,S_1,V_1)-\epsilon\right)}2^{-n\left(H(U)+H(S_1)+H(V_1|U)-3\epsilon\right)}\\
&\ \ \ \ \ \ \  \ \ \ =2^{-n\left(I(V_1;S_1|U)+4\epsilon\right)}
\end{align}

Consequently, we have
\begin{align}
&\nonumber Pr(E_1)\leq[1-2^{-n(I(V_1;S_1|U)+4\epsilon)}]^{2^{nR_{s_1}}}\\\nonumber
&\ \ \ \ \ \ \ \ \ \ {_\leq^a} \exp\left(2^{-n(R_{s_1}-I(V_1;S_1|U)-4\epsilon)}\right)\\
&\ \ \ \ \ \ \ \ \ \ \exp(-2^{-4n\epsilon})
\end{align}
where (a) comes from $1-x\geq \ln(x)$ and (b) follows since $R_{s_1}\geq  I(V_1;S_1|U)+4\epsilon$. As a result, if $R_{s_1}\geq I(V_1;S_1|U)+4\epsilon$ , $Pr(E_1)$ tends to zero as $n\rightarrow\infty$. Similarly, $Pr(E_2)$ tend to zero if $R_{s_2}\geq I(V_2;S_2|U)+4\epsilon$ as $n\rightarrow\infty$.

By considering the Markov lemma, $Pr\left(E_3^b|\overline{E_1^b},\overline{E_2^b}\right)$decays to zero, and $Pr\left(E_4|\overline{E_3}\right)$ , $Pr\left(E_6 |\overline{E_3}\right)$ ,and $Pr\left(E_8|\overline{E_3}\right)$ tend to zero according to the AEP Theorem \cite{22}-\cite{23}.

For evaluating $Pr\left(E_5|\overline{E_3},\overline{E_4}\right)$ we have
\begin{align}
&\nonumber Pr\left(E_5|\overline{E_3},\overline{E_4}\right)\leq2^{n(R_2+R_{s_2})}\times\\\nonumber
&\sum_{(\textbf{u},\textbf{v}_\textbf{2},\textbf{x}_\textbf{2},\textbf{s}_\textbf{1})\in A_\epsilon^n(UV_2X_2S_1)}P(\textbf{s}_\textbf{1})P(\textbf{u})P(\textbf{v}_\textbf{2}|\textbf{u})P(\textbf{x}_\textbf{2}|\textbf{us}_\textbf{1})\\\nonumber
&\leq2^{nR_2}2^{n\left(I(V_2;S_2|U)+4\epsilon\right)}2^{n\left(H(UV_2S_1X_2)+\epsilon\right)}\times\\\nonumber
&2^{-n\left(H(S_1)+H(U)+H(V_2|U)+H(X_2|US_1)-4\epsilon\right)}\\\nonumber
&=2^{n\left(R_2+I(V_2;S_2|U)-I(V_2;S_1|U)-I(X_2;V_2|US_1)+9\epsilon\right)}\\\nonumber
&=2^{n\left(R_2-I(X_2;V_2|US_1)+H(V_2|US_1)-H(V_2|US_2)+9\epsilon\right)}\\\nonumber
&\ {_=^c}2^{n\left(R_2-I(X_2;V_2|US_1)+H(V_2|US_1)-H(V_2|US_1S_2)+9\epsilon\right)}\\
&=2^{n\left(R_2-I(X_2;V_2|US_1)+I(V_2;S_2|US_1)+9\epsilon\right)}
\label{16}
\end{align}
where (c) comes from the Markov chain $S_1 - US_2 - V_2$. Hence, for $n\rightarrow\infty$, $Pr\left(E_5|\overline{E_3},\overline{E_4}\right)$ tends to zero if $R_2\leq I(X_2;V_2|US_1)-I(V_2;S_2|US_1)-9\epsilon$. In a similar way, for $n\rightarrow\infty$, $Pr\left(E_7|\overline{E_3},\overline{E_6}\right)$ decays to zero when $R_1\leq I(X_1;V_1|US_2)-I(V_1;S_1|US_2)-9\epsilon$.

To evaluate $Pr\left(E_9|\overline{E_3},\overline{E_8}\right)$ we have
\begin{align}
&\nonumber Pr((\textbf{u},\textbf{v}_\textbf{1},\textbf{v}_\textbf{2},\textbf{y})\in A_\epsilon^n (UV_1V_2Y))\\\nonumber
&\leq2^{n(R_1+R_2+R_{S_1}+R_{S_2})}\times\\\nonumber
&\sum_{\left((\textbf{u},\textbf{v}_\textbf{1},\textbf{v}_\textbf{2},\textbf{y})\in A_\epsilon^n(UV_1V_2Y)\right)}P(\textbf{u})P(\textbf{v}_\textbf{1}|\textbf{u})P(\textbf{v}_\textbf{2}|\textbf{u})P(\textbf{y})\\\nonumber
&\leq2^{n\left(R_1+R_2+I(V_1;S_1|U)+I(V_2;S_2|U)+8\epsilon\right)}2^{n\left(H(UV_1V_2Y)+\epsilon\right)}\times\\\nonumber
&2^{-n\left(H(U)+H(V_1|U)+H(V_2|U)+H(Y)-4\epsilon\right)}\\\nonumber
&=2^{n\left(R_1+R_2+I(V_1;S_1|U)+I(V_2;S_2|U)-I(V_2;V_1|U)-I(Y;V_1V_2U)+13\epsilon\right)}\\\nonumber
&=2^{n\left(R_1+R_2-I(Y;V_1V_2U)+H(V_1|UV_2)-H(V_1|US_1)+I(V_2;S_2|U)+13\epsilon\right)}\\\nonumber
&\ {_=^d}2^{n\left(R_1+R_2-I(Y;V_1V_2U)+H(V_1|UV_2)-H(V_1|US_1S_2V_2)\right)}\times\\\nonumber
&2^{\left(I(V_2;S_2S_1|U)+13\epsilon\right)}\\\nonumber
&=2^{n\left(R_1+R_2-I(Y;V_1V_2U)+I(V_1;S_1S_2|UV_2)+I(V_2;S_2S_1|U)+13\epsilon\right)}\\
&=2^{n\left(R_1+R_2-I(Y;V_1V_2U)+I(V_1V_2;S_1S_2|U)+13\epsilon\right)}
\label{17}
\end{align}

Where (d) follows from the Markov chain $S_2V_2- US_1- V_1$ and $S_1- US_2- V_2$. Therefore, for sufficiently large $n$, $Pr(E_9|\overline{E_3},\overline{E_8})$ tends to zero when $R_1+R_2\leq I(Y;V_1V_2U)-I(V_1V_2;S_1S_2|U)-13\epsilon$.

The achievability of a rate region is established by the combination of \eqref{16}-\eqref{17} for the law of form \eqref{5}.
\end{proof}
\begin{proof}[Proof of Theorem 2]
In this section, we assume that the channel states are $(\textbf{s}_\textbf{0},\textbf{s}_\textbf{1},\textbf{s}_\textbf{2})$, which are independent of each other. Thus, $(\textbf{s}_\textbf{0},\textbf{s}_\textbf{1})$ and $(\textbf{s}_\textbf{0},\textbf{s}_\textbf{2})$ are known non-causally at encoders 1 and 2, respectively. Since, both encoders know $\textbf{s}_\textbf{0}$, we can modify the generating codebooks of the \emph{Theorem 1} by choosing codeword $\textbf{u}$ based on the GPC technique. Therefore, the pdf of r.vs is written as \eqref{7}.

\textbf{codebook generation}: Generate $2^{n(R_{S_0}+R_1+R_2)}$  codewords $\textbf{u}(w_0,l)$, each with the probability $Pr(\textbf{u})=\prod_{i=1}^np(u_i)$. Similar to the generation of codebooks for \emph{Theorem 1}, we generate other codewords.It is necessary only to modify the error events of \emph{Theorem 1} as below:

1. There is one more error event as follow:
\begin{eqnarray}
\nonumber E_0^b:=\{\nexists l:1\leq j\leq 2^{nR_{S_0}}:(\textbf{s}_\textbf{0},\textbf{u}(W_0,l))\in A_\epsilon^n(U,S_0)\}
\end{eqnarray}

2. Replace $\textbf{s}_\textbf{1}$ with $(\textbf{s}_\textbf{0},\textbf{s}_\textbf{1})$ and $\textbf{s}_\textbf{2}$ with $(\textbf{s}_\textbf{0},\textbf{s}_\textbf{2})$.

3. Replace $\textbf{u}(.)$ with $\textbf{u}(.,l)$.

Using the covering lemma, we can show that $R_{S_0}\geq I(U;S_0)+2\epsilon$. Following the same procedure of error probability analysis as \emph{Theorem 1}, the achievable rate region as \eqref{6} is characterized.
\end{proof}
\section{\textbf{Gaussian channel}}
In the Gaussian channel, we have considered the state-dependent MAC with cooperating encoders with three independent states $\textbf{s}_\textbf{0},\textbf{s}_\textbf{1}$ and $\textbf{s}_\textbf{2}$ such that $Y=X_1+X_2+S_0+S_1+S_2+Z$ where, $X_1$ and $X_2$ are channel inputs, $Z$ is the noise, and $Y$ is the output. Encoder 1 and 2 have access to the pairs $(\textbf{s}_\textbf{0},\textbf{s}_\textbf{1})$ and $(\textbf{s}_\textbf{0},\textbf{s}_\textbf{2})$ non-causally, respectively. Since two encoders know $s_0$ non-causally, both encoders use the DPC for the auxiliary random variable $U$. Thus, encoder 1 partially deletes the state $\textbf{s}_\textbf{1}$ and then utilizes the DPC to generate $V_1$ based on the $\textbf{s}_\textbf{0}$ and remainder of $\textbf{s}_\textbf{1}$ $( i.e., \textbf{s}_\textbf{1}^{'})$ \cite{12}. Similarly, encoder 2 first partially deletes $\textbf{s}_\textbf{2}$, then utilizes the DPC to generate $V_2$ based on the $\textbf{s}_\textbf{0}$ and remainder of $\textbf{s}_\textbf{2}$ ( i.e., $\textbf{s}_\textbf{2}^{'}$).

In the following, we have assumed that $\tilde{U},\tilde{V_1},\tilde{V_2},S_0,$ and $Z$ are independent Gaussian r.vs. It has also assumed that $\tilde{\tilde{V_1}}$, $\tilde{\tilde{V_2}},S_1$, and $S_2$ are independent Gaussian r.vs.
$\tilde{U},\tilde{\tilde{V_1}}$, and $\tilde{\tilde{V_2}}$ are $\aleph(0,1)$, $S_j\sim \aleph(0,Q_j )$,$\forall j=0,1,2,$ and $Z\sim \aleph(0,N)$. Also, each user has power constraint $P_i, i=1,2$, and dedicates $\overline{\eta_i}P_i$ of its total power, $P_i$, to remove $\textbf{s}_\textbf{i}$, hence $1-min\{1,\frac{Q_i}{P_i}\}\leq\eta_i\leq1$. Therefore
\begin{align}
\left\{
\begin{array}{ccc}
X_i:=\sqrt{\alpha_i\eta_iP_i}\tilde{U}+\sqrt{\tilde{\alpha_i}\eta_iP_i}\tilde{\tilde{V_i}}-\sqrt{\frac{\overline{\eta_i}P_i}{Q_i}}S_i\text{ ,}i=1,2\\
U:=\tilde{U}+\gamma_0S_0\\
V_i:=\tilde{V_i}+\gamma_{0i}S_0 \text{  },\text{  } \tilde{V_i}:=\tilde{\tilde{V_i}}+\gamma_iS_i^{'}\\
S_i^{'}:=\left(1-\sqrt{\frac{\overline{\eta_i}P_i}{Q_i}}\right)S_i\\
S_i^{'}\sim N(0,Q_i^{'}) : Q_i^{'}=(\sqrt{Q_i}-\sqrt{\overline{\eta_i}P_i})^2\\
\gamma_0=\frac{\sqrt{\alpha_1\eta_1P_1}+\sqrt{\alpha_2\eta_2P_2}}{(\sqrt{\alpha_1\eta_1P_1}+{\alpha_2\eta_2P_2})^2+\overline{\alpha_1}\eta_1P_1+\overline{\alpha_2}\eta_2P_2+Q_1^{'}+Q_2^{'}+N}\\
\gamma_{0i}=\frac{\sqrt{\overline{\alpha_i}\eta_iP_i}+\gamma_iQ_i^{'}}{(\sqrt{\alpha_1\eta_1P_1}+\sqrt{\alpha_2\eta_2P_2})^2+\overline{\alpha_1}\eta_1P_1+\overline{\alpha_2}\eta_2P_2+Q_1^{'}+Q_2^{'}+N}\text{ ,} i=1,2\\
\gamma_i=\frac{\sqrt{\overline{\alpha_i}\eta_iP_i}}{\overline{\alpha_1}\eta_1P_1+\overline{\alpha_2}\eta_2P_2+Q_{3-i}{'}+N}\text{ ,}i=1,2
\end{array} \right.
\label{18}
\end{align}

\begin{theorem} Using Theorem 2, an achievable rate region for the Gaussian channel is as
\begin{align}
&(R_1,R_2)=\bigcup_{\eta_1,\eta_2,\alpha_1,\alpha_2}\\\nonumber
&\left\{
\begin{array}{ccc}
R_1\geq0 \text{ , } R_2\geq0\\
R_1+R_2\leq\frac{1}{2}\log_2\Biggl\{\left(1+\frac{(\sqrt{\alpha_1\eta_1P_1}+\sqrt{\alpha_2\eta_2P_2})^2}{\overline{\alpha_1}\eta_1P_1+\overline{\alpha_2}\eta_2P_2+Q_1^{'}+Q_2^{'}+N}\right)\times\\
\left(1+\frac{\overline{\alpha_1}\eta_1P_1}{\overline{\alpha_2}\eta_2P_2+Q_{2}{'}+N}\right)\left(1+\frac{\overline{\alpha_2}\eta_2P_2}{\overline{\alpha_1}\eta_1P_1+Q_{1}{'}+N}\right)         \Biggr\}
\end{array} \right\}
\end{align}
\end{theorem}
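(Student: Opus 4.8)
The plan is to substitute the Gaussian auxiliary assignments in \eqref{18} into the region of Theorem 2 and evaluate each mutual-information term as a Gaussian entropy difference. The first step is to reduce the channel. Inserting the $X_i$ into $Y=X_1+X_2+S_0+S_1+S_2+Z$ and using the identity $S_i-\sqrt{\overline{\eta_i}P_i/Q_i}\,S_i=S_i^{'}$, the received signal collapses to the effective form $Y=(\sqrt{\alpha_1\eta_1P_1}+\sqrt{\alpha_2\eta_2P_2})\tilde{U}+\sqrt{\overline{\alpha_1}\eta_1P_1}\tilde{\tilde{V_1}}+\sqrt{\overline{\alpha_2}\eta_2P_2}\tilde{\tilde{V_2}}+S_0+S_1^{'}+S_2^{'}+Z$, in which the full states $S_1,S_2$ have been replaced by the reduced-variance residuals $S_i^{'}\sim\aleph(0,Q_i^{'})$. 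Before anything else I would verify feasibility: each $X_i$ splits its power as $\alpha_i\eta_iP_i$ (into $\tilde{U}$), $\overline{\alpha_i}\eta_iP_i$ (into $\tilde{\tilde{V_i}}$) and $\overline{\eta_i}P_i$ (into state pre-subtraction), which sum to $P_i$, while the constraint $1-\min\{1,Q_i/P_i\}\le\eta_i\le1$ keeps $\overline{\eta_i}P_i\le Q_i$ so that $Q_i^{'}=(\sqrt{Q_i}-\sqrt{\overline{\eta_i}P_i})^2\ge0$.

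Next I would dispatch the two single-user bounds $R_i\le I(X_i;V_i|U,S_0)-I(V_i;S_i|U,S_0)$. Conditioning on $(U,S_0)$ fixes $\tilde{U}$ and the $\gamma_{0i}S_0$ offset, so these reduce to ordinary single-user dirty-paper expressions in which $V_i$ is written against the residual $S_i^{'}$; with $\gamma_i$ the matched inflation factor, each bound evaluates to a strictly positive single-user rate. I would then argue that this per-user value never falls below the portion of the sum rate one would assign to user $i$, so that the individual bounds are inactive and the region collapses to $R_1,R_2\ge0$ together with the sum constraint. Confirming that the per-user bounds are genuinely non-binding (rather than merely positive) is one point that needs a careful check.

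The heart of the proof is the sum-rate term $R_1+R_2\le I(Y;U,V_1,V_2)-I(U,V_1,V_2;S_0,S_1,S_2)$. I would expand it by the chain rule into a $U$-layer, a $V_1$-layer and a $V_2$-layer, $I(Y;U)+I(Y;V_1|U)+I(Y;V_2|U,V_1)$, against the matching pieces of the Gel'fand--Pinsker penalty, and then show that each $\gamma$ in \eqref{18} is exactly the linear-MMSE coefficient for the interference it is meant to pre-cancel. Concretely, $\gamma_0$ is the Costa coefficient for coding $\tilde{U}$ (effective gain $\sqrt{\alpha_1\eta_1P_1}+\sqrt{\alpha_2\eta_2P_2}$) against $S_0$ while treating $\tilde{\tilde{V_1}},\tilde{\tilde{V_2}},S_1^{'},S_2^{'},Z$ as noise; $\gamma_{0i}$ and $\gamma_i$ are the coefficients that pre-cancel $S_0$ and $S_i^{'}$ at the $V_i$-layer while the other user's private signal and residual state remain as noise. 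Substituting these values makes all state-induced cross terms in the entropy expansion cancel against the penalty, so the $U$-layer yields the first factor $1+\frac{(\sqrt{\alpha_1\eta_1P_1}+\sqrt{\alpha_2\eta_2P_2})^2}{\overline{\alpha_1}\eta_1P_1+\overline{\alpha_2}\eta_2P_2+Q_1^{'}+Q_2^{'}+N}$, and the two $V_i$-layers yield $1+\frac{\overline{\alpha_1}\eta_1P_1}{\overline{\alpha_2}\eta_2P_2+Q_2^{'}+N}$ and $1+\frac{\overline{\alpha_2}\eta_2P_2}{\overline{\alpha_1}\eta_1P_1+Q_1^{'}+N}$, whose product is $2^{2(R_1+R_2)}$.

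The main obstacle I anticipate is precisely this cancellation. Verifying that $I(U,V_1,V_2;S_0,S_1,S_2)$ exactly annihilates the state contributions in $I(Y;U,V_1,V_2)$ requires tracking the joint covariance of $(\tilde{U},\tilde{\tilde{V_1}},\tilde{\tilde{V_2}},S_0,S_1^{'},S_2^{'})$ through a chain of determinant computations, and it only goes through if $\gamma_0,\gamma_{0i},\gamma_i$ are the stated MMSE values. The delicate sub-point is the $(V_1,V_2)$ layer: because each user can pre-cancel only its own residual state, the two private factors appear in a ``treat-the-other-as-noise'' form, and one must check that the joint Gel'fand--Pinsker penalty for $(V_1,V_2)$ reduces so that their product---rather than a single joint term---emerges. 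Once the matched-filter algebra is confirmed, the union over the free parameters $\eta_1,\eta_2,\alpha_1,\alpha_2$ gives the stated region.
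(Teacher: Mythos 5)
Your plan for the sum-rate term coincides with the paper's actual proof. The paper substitutes the assignments \eqref{18} into the region of Theorem~2 and isolates exactly the cancellation you describe as its \emph{Lemma 1}: with the Markov relations $S_1S_2 - S_0 - U$ and $V_2S_2 - US_0 - V_1S_1$, it shows $I(Y;U,V_1,V_2)=I(Y;U,V_1,V_2|S_0)+I(U,V_1,V_2;S_0)$ and $I(Y;V_i|U,S_0)=I(Y;V_i|U,S_0,S_i)+I(S_i;V_i|U,S_0)$, i.e.\ the Gel'fand--Pinsker penalties for $S_0$ and for the residuals $S_i^{'}$ are fully recovered, which yields precisely your three factors. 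The only stylistic difference is how the cancellation is verified: instead of your ``chain of determinant computations,'' the paper forms the LMMSE residuals $\varphi_0=\tilde U-\gamma_0 Y$ and $\varphi_i=\tilde V_i-\gamma_{0i}Y$, checks $E\{\varphi_j Y\}=0$ (this is where the $\gamma$'s being matched MMSE coefficients enters) and that each $\varphi_j$ is independent of $S_0$, so that $h(U,V_1,V_2|Y)=h(\varphi_0,\varphi_1,\varphi_2|Y)=h(\varphi_0,\varphi_1,\varphi_2)=h(U,V_1,V_2|Y,S_0)$ by joint Gaussianity. That is your MMSE argument, just in a cleaner orthogonality form. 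Your channel reduction to $Y=(\sqrt{\alpha_1\eta_1P_1}+\sqrt{\alpha_2\eta_2P_2})\tilde{U}+\sqrt{\overline{\alpha_1}\eta_1P_1}\tilde{\tilde{V_1}}+\sqrt{\overline{\alpha_2}\eta_2P_2}\tilde{\tilde{V_2}}+S_0+S_1^{'}+S_2^{'}+Z$ and the power-feasibility check also match what the paper implicitly does.

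Where you genuinely diverge --- and where your route would break --- is the point you yourself flagged: the individual bounds $R_i\leq I(X_i;V_i|U,S_0)-I(V_i;S_i|U,S_0)$. You propose to show these are non-binding by proving the per-user Gel'fand--Pinsker value always dominates user $i$'s share of the sum rate. That comparison is false for some parameter choices: let $\overline{\alpha_1}\eta_1P_1\rightarrow 0$; then $\gamma_1,\gamma_{01}\rightarrow 0$, so $V_1\rightarrow\tilde{\tilde{V_1}}$ becomes independent of $X_1$ and both $I(X_1;V_1|U,S_0)$ and $I(V_1;S_1|U,S_0)$ vanish, while the sum-rate bound stays bounded away from zero through the coherent-combining factor $1+(\sqrt{\alpha_1\eta_1P_1}+\sqrt{\alpha_2\eta_2P_2})^2/(\cdot)$. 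Hence formal substitution into \eqref{6} gives a pentagon strictly smaller than the claimed triangle, and no amount of careful algebra will show the individual constraints are inactive. The paper instead discards them \emph{operationally}: because each encoder cribs the other's channel input noiselessly, the ``link'' between encoders has infinite capacity, the encoders decode each other's messages completely, and only the sum-rate constraint needs to be evaluated for the Gaussian model. So your instinct that this step needed care was correct, but the repair is not a sharper computation within Theorem~2's formula; it is a (terse, and itself only sketched) re-examination of the coding scheme showing that the cross-encoder decoding constraints disappear in the noise-free cribbing setting.
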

\begin{proof}[Proof of Theorem 3] Since the link between two encoders is noise-free (infinite capacity), two encoders decode other's messages completely, so only the sum-rate is needed to be evaluated based on \eqref{6}. By using \emph{Lemma 1}, we can prove \emph{Theorem 3}.

\emph{\textbf{Lemma 1}}: Using the Markov relations as
\begin{eqnarray}
\nonumber& S_1S_2 - S_0 - U\\\nonumber
&V_2S_2 - US_0 - V_1S_1
\end{eqnarray}

and equation \eqref{18}, it is proved that
\begin{align}
&\ \ \ \ \ \ \ \ \ \ \ \ \ \ \ \ \ \ \ \ \ \ \ \ \ \ \ I(Y;U,V_1,V_2)=\\\nonumber
&\ \ I(Y,S_0;U,V_1,V_2)=I(Y;U,V_1,V_2|S_0)+I(U,V_1,V_2;S_0)
\label{20}
\end{align}
\begin{align}
&\ \ \ \ \ \ \ \ \ \ \ \ \ \ \ \ \ \ \ \ \ \ \ \ \ I(Y;V_i|U,S_0)=\\\nonumber
&\ I(YS_i;V_i|U,S_0)=I(Y;V_i|U,S_0,S_i)+I(S_i;V_i|U,S_0)\\\nonumber
&\ \ \ \ \ \ \ \ \ \ \ \ \ \ \ \ \ \ \ \ \ \ \ \ \ \ \text{for},\ i=1,2
\label{21}
\end{align}

\emph{Proof}: In order to prove (20), we need to show that $h(U,V_1,V_2|Y)=h(U,V_1,V_2|Y,S_0)$. We rewrite $h(U,V_1,V_2|Y)$ as
\begin{eqnarray}
\nonumber&h(U,V_1,V_2|Y)=h\biggl(\tilde{U}+\gamma_0S_0\tilde{V_1}+\gamma_{01}S_0\tilde{V_2}+\\\nonumber
&\gamma_{02}S_0|(\sqrt{\alpha_1\eta_1P_1}+\sqrt{\alpha_2\eta_2P_2 })\tilde{U}+\\\nonumber
&(\sqrt{\overline{\alpha_1}\eta_1P_1})\tilde{\tilde{V_1}}+(\sqrt{\overline{\alpha_2}\eta_2P_2})\tilde{\tilde{V_2}}+S_0+S_1^{'}+S_2^{'}+Z\biggr)\\\nonumber
&=h\biggl(\varphi_0\varphi_1\varphi_2|(\sqrt{\alpha_1\eta_1P_1}+\sqrt{\alpha_2\eta_2P_2 })\tilde{U}+\\\nonumber
&(\sqrt{\overline{\alpha_1}\eta_1P_1})\tilde{\tilde{V_1}}+(\sqrt{\overline{\alpha_2}\eta_2P_2})\tilde{\tilde{V_2}}+S_0+S_1^{'}+S_2^{'}+Z\biggr)
\end{eqnarray}
where $\varphi_0,\varphi_1,$ and $\varphi_2$ are defined as
\begin{eqnarray}
\nonumber&\varphi_0 :=\tilde{U}-\gamma_0\biggl((\sqrt{\alpha_1\eta_1P_1}+\sqrt{\alpha_2\eta_2P_2 })\tilde{U}+\\\nonumber
&(\sqrt{\overline{\alpha_1}\eta_1P_1})\tilde{\tilde{V_1}}+(\sqrt{\overline{\alpha_2}\eta_2P_2})\tilde{\tilde{V_2}}+S_0+S_1^{'}+S_2^{'}+Z\biggr)\\\nonumber
&\varphi_i :=\tilde{V_i}-\gamma_{0i}\biggl((\sqrt{\alpha_1\eta_1P_1}+\sqrt{\alpha_2\eta_2P_2 })\tilde{U}+\\\nonumber
&(\sqrt{\overline{\alpha_1}\eta_1P_1})\tilde{\tilde{V_1}}+(\sqrt{\overline{\alpha_2}\eta_2P_2})\tilde{\tilde{V_2}}+S_0+S_1^{'}+S_2^{'}+Z\biggr)
\end{eqnarray}

\begin{flushleft}
\begin{figure}[ut]
\includegraphics[width = 3.5in]{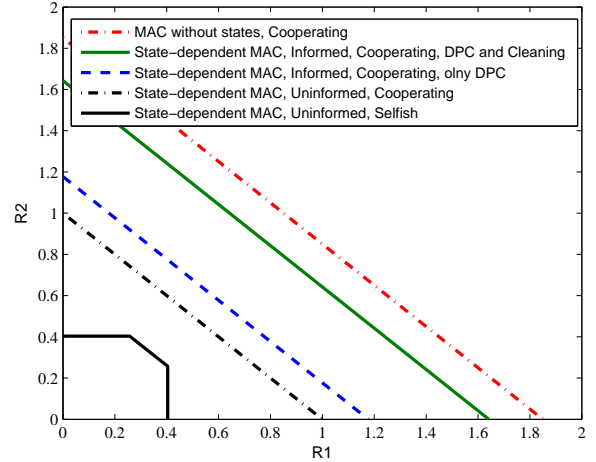}
\caption{Achievable rate region of the two-user Gaussian MAC for various scenarios. $P_1=P_2=3\times(Q_0=Q_1=Q_2=N=1)$}
\end{figure}
\end{flushleft}
We can show that $E\{\varphi_j\times Y\}=0$, $j=0,1,2$. It is also clear that $\varphi_j$ is independent of $S_0$; thus,
\begin{eqnarray}
\nonumber h(\varphi_0,\varphi_1,\varphi_2|Y)=h(\varphi_0,\varphi_1,\varphi_2)=h(\varphi_0,\varphi_1,\varphi_2|Y,S_0)
\end{eqnarray}

Following the same procedure, (21) is proved.
\end{proof}
The Gaussian channel with the channel state $(\textbf{s}_\textbf{0},\textbf{s}_\textbf{1},\textbf{s}_\textbf{2})$ and parameters $P_1=P_2=3\times(Q_0=Q_1=Q_2=N=1)$ is considered and achievable rate regions for various scenarios are shown in Fig. 3. We use "informed/uninformed" to describe whether encoders are aware/unaware of the CSIT non-causally, i.e., $(\textbf{s}_\textbf{0},\textbf{s}_\textbf{1})$ and $(\textbf{s}_\textbf{0},\textbf{s}_\textbf{2})$ are known/ not-known at encoders 1 and 2, respectively, and "cooperating /selfish" to describe whether the encoders do/do not cooperate with each other, i.e., they do/do not use the BME. As expected, when the channel is state-dependent and encoders do not have the CSIT (uninformed) and do not cooperate with each other (selfish), the achievable rate region is the smallest. However, when encoders prefer to cooperate with each other whereby they are still uninformed, it is shown that the achievable rate region becomes triangular and larger. When encoders are informed and cooperative (a combination of the BME and DPC with or without cleaning technique is used at the encoders), the achievable rate region is also triangular and is larger than the two previous scenarios. It is also clear that when encoders use the DPC and partial cleaning of their states, the achievable rate region is remarkably larger than when encoders use the DPC only. Finally, the capacity region of the model without states is plotted \cite{18}.

\begin{remark}According to the result of Theorem 3, it is clear that the effect of $S_0$ is completely removed and the channel is similar to the channel without $S_0$. Hence, when $S_1=S_2=\emptyset$, our strategy is optimum and we have achieved the capacity region which is like a triangular with boundaries determined by the lines $R_1=R_2=0$ and $R_1+R_2=\frac{1}{2}\log_2\left(1+\frac{(\sqrt{P_1}+\sqrt{P_2})^2}{N}\right)$.
\end{remark}

\begin{remark}When we have independent states $S_1$ and $S_2$, combining the DPC and cleaning techniques, might yield better performance than the DPC only. Consequently, combining the DPC and cleaning could be useful in some circumstances.
\end{remark}
\section{\textbf{Conclusion}}
This paper investigated the DM-MAC with two correlated states, each one of which is non-causally available for only one of the encoders. It was assumed that, each encoder receives and learns others transmitting symbols strictly-causally. Using the BME and GPC techniques, the achievable rate regions for two scenarios were established. First, general correlated states $(\textbf{s}_\textbf{1},\textbf{s}_\textbf{2})$ were considered and, second, three independent states $(\textbf{s}_\textbf{0},\textbf{s}_\textbf{1},\textbf{s}_\textbf{2})$ were considered such that pairs $(\textbf{s}_\textbf{0},\textbf{s}_\textbf{1})$ and $(\textbf{s}_\textbf{0},\textbf{s}_\textbf{2})$ were non-causally available at encoders 1 and 2, respectively. Therefore, in the second case, the encoders could choose the common auxiliary r.v, $U$, by the GPC, which causes the achievable rate region to be extended. Ultimately, the Gaussian channel for second scenario was studied and the achievable rate region using the DPC and cleaning techniques was also established. It was shown that, if each encoder partially deletes its known the CSIT before transmission, it helps the other encoder and results in a larger achievable rate region. As a result, deriving the Gaussian models showed that if both states are the same, the state's effect can be removed like Costa's DPC and the capacity region can be achieved.
\section*{Acknowledgment}
We would like to appreciate the ISSL group of Sharif
University of Technology and specially Mrs. Mirmohseni and Mrs. Akhbari for their
valuable comments.



%

\end{document}